\newtheorem{theorem}{Theorem}[section]
\newtheorem{lemma}[theorem]{Lemma}
\theoremstyle{definition}
\theoremstyle{remark}
\newtheorem{remark}[theorem]{Remark}
\numberwithin{equation}{section}
\def\a{{\alpha}}
\def\g{{\gamma}}
\def\e{{\epsilon}}
\def\ep{{\eta}}
\def\f{{\varphi}}
\def\r{{\rho}}
\def\p{{\psi}}
\def\G{{\Gamma}}
\def\O{{\Omega}}
\def\vvv{{{\bf v}}}
\def\R{{{\bf R}^1}}
\def\RR{{{\bf R}^2}}
\def\RRR{{{\bf R}^3}}
\def\9{{\ \hbox{in}\ \O}}
\def\1{{\ \hbox{on}\ \G_1}}
\def\2{{\ \hbox{on}\ \G_2}}
\def\3{{\ \hbox{on}\ \G_3}}
\def\0{{\ \hbox{on}\ \G}}
\begin{document}

\title[Boundary Value Problems in General Relativity]{Boundary Value Problems in General Relativity}
\author{Giovanni Cimatti}
\address{Department of Mathematics, Largo Bruno
  Pontecorvo 5, 56127 Pisa Italy}
\email{cimatti@dm.unipi.it}


\subjclass[2010]{83C10, 83C05}



\keywords{Axial symmetric gravitational fields, existence and uniqueness of solutions, Poiseuille solution}

\begin{abstract}

Certain theorems of existence, non-existence and uniqueness for boundary value problems modeling axial symmetric problems in general relativity are presented  using the Weyl's metric. A solution related to the classical Poiseuille solution of non-relativistic fluid mechanics is also presented.
\end{abstract}

\maketitle

\section{Introduction}
The theory of newtonian potential is essentially a theory of elliptic boundary value problems. On the other hand, in general relativity the coefficients of the metric play the role of unknown potentials \cite{WP}, but the corresponding boundary value problems seems to have received little attention. In this paper we consider, in Section 1, the simple case in which the energy-stress tensor vanishes everywhere and we give a theorem of existence and uniqueness for the corresponding boundary value problem stated in a bounded axial symmetric domain when the values of the coefficients of the metric are given on the boundary. In Section 3 we treat the case of the energy-stress tensor corresponding to a continuous distribution of fluid in absence of body forces and fluid motion. We obtain a result of existence and of non-existence of solutions. Finally in Section 4 we find an analogue of the Poiseuille solution of classical fluid mechanics starting from the Einstein's equations. We always consider axial symmetric situations and use the Weyl's metric \cite{HW}

\begin{equation}
\label{-1_1}
ds^2=e^{2\p}dt^2-e^{2\g-2\p}d\r^2-e^{2\g-2\p}dz^2-e^{-2\p}\r^2d\f^2,
\end{equation}
where the unknown ``potentials'' $\p$ and $\g$ are assumed to be functions of $\r$ and $z$ only $\p=\p(\r,z)$, $\g=\g(\r,z)$. We recall that the non-vanishing components of the Einstein's tensor $G_{ik}$ corresponding to the metric (\ref{-1_1}) are \cite{SKMHH}

\begin{equation}
\label{1_1}
G_{11}=e^{4\p+2\g}\bigl[-2\bigl(\p_{\r\r}+\frac{\p_\r}{\r}+\p_{zz}\bigl)+\p_\r^2+\p_z^2+\g_{\r\r}+\g_{zz}\bigl]
\end{equation}

\begin{equation}
\label{2_1}
G_{22}=\p_\r^2-\frac{\g_\r}{\r}-\p_z^2,\quad G_{23}=2\p_\r\p_z-\frac{\g_z}{\r},\quad G_{33}=-\p_\r^2+\frac{\g_\r}{\r}+\p_z^2
\end{equation}

\begin{equation}
\label{5_1}
G_{44}=-e^{-2\g}\r^2\bigl(\p_\r^2+\p_z^2+\g_{\r\r}+\g_{zz}\bigl).
\end{equation}

\section{The case $T_{ik}=0$}
Let us consider an axial symmetric bounded subset $\O$ of $\RRR$ with a regular boundary $\G$. We suppose $\O$ free of gravitational masses and of electric charges. Outside $\O$ there exists a distribution of mass, also  axial symmetric, which determines on $\G$ the values of $\p$ and $\g$ in a way not depending on angular variable $\f$. We assume the energy-stress tensor $T_{ik}$ to vanish in $\O$. The Einstein's equations

\begin{equation}
\label{1_2}
G_{ik}=K\ T_{ik},\quad K=8\pi\frac{G}{c^4},\quad \hbox{G gravitational constant}
\end{equation}
determine the ``potentials'' $\p(\r,z)$ and $\g(\r,z)$ via the following overdetermined system of partial differential equations  \cite{WB}, \cite{RW}

\begin{equation}
\label{1_3}
-2\bigl(\p_{\r\r}+\frac{\p_\r}{\r}+\p_{zz}\bigl)+\p_\r^2+\p_z^2+\g_{\r\r}+\g_{zz}=0
\end{equation}

\begin{equation}
\label{2_3}
\p_\r^2-\frac{\g_\r}{\r}-\p_z^2=0
\end{equation}

\begin{equation}
\label{4_3}
2\p_\r\p_z-\frac{\g_z}{\r}=0
\end{equation}

\begin{equation}
\label{5_3}
\p_\r^2+\p_z^2+\g_{\r\r}+\g_{zz}=0.
\end{equation}
The system (\ref{1_3})-(\ref{5_3}) is supplemented with the boundary conditions

\begin{equation}
\label{6_3}
\p(\r,z)=\p_\G,\quad\g(\r_0,z_0)=\g_0,\quad (\r_0,z_0)\in\G,\quad \g_0\in\R.
\end{equation}
We prove in the next theorem that the boundary value problem (\ref{1_3})-(\ref{6_3}) has one and only one solution. In (\ref{6_3}) $\p_\G$ is a function given on $\G$ whereas $\g$ is supposed to be known in one point of $\G$. We prove in fact the more general 

\begin{theorem}
Let the section of $\O$ with an arbitrary half-plane containing the $z$-axis be simply connected. If the function $\r h(\r,z)+ig(\r,z)$  of the complex variable $\r+iz$ \footnote{We are in the physically relevant case when $g=0$ and $h=0$.}is analytic then the overdetermined system of P.D.E.

\begin{equation}
\label{1_5}
-2\bigl(\p_{\r\r}+\frac{\p_\r}{\r}+\p_{zz}\bigl)+\p_\r^2+\p_z^2+\g_{\r\r}+\g_{zz}=0
\end{equation}

\begin{equation}
\label{2_5}
\p_\r^2-\frac{\g_\r}{\r}-\p_z^2=g(\r,z)
\end{equation}

\begin{equation}
\label{4_5}
2\p_\r\p_z-\frac{\g_z}{\r}=h(\r,z)
\end{equation}

\begin{equation}
\label{5_5}
\p_\r^2+\p_z^2+\g_{\r\r}+\g_{zz}=0,
\end{equation}
with the boundary conditions

\begin{equation}
\label{7_5}
\p=\p_\G \quad\0
\end{equation} 

\begin{equation}
\label{8_5}
\g(\r_0,z_0)=\g_0,\quad (\r_0,z_0)\in\RR,\quad \g_0\in\R,
\end{equation} 
has one and only one solution.
\end{theorem}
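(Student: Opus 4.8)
The plan is to peel the overdetermined system apart into a single genuine elliptic boundary value problem for $\p$ and an exact-differential reconstruction of $\g$, and then to check that the analyticity hypothesis is exactly what supplies the two consistency conditions the remaining equations impose.

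First I would subtract (\ref{5_5}) from (\ref{1_5}). The quadratic block $\p_\r^2+\p_z^2+\g_{\r\r}+\g_{zz}$ cancels identically, leaving the linear equation
\[
\p_{\r\r}+\frac{\p_\r}{\r}+\p_{zz}=0 \9,
\]
which is nothing but the three-dimensional Laplace equation acting on axially symmetric functions. Together with the Dirichlet datum (\ref{7_5}) this is a classical Dirichlet problem, so by potential theory it has one and only one solution $\p$; this is the sole place where the boundary function $\p_\G$ enters, and it already determines $\p$ uniquely. On the axis $\r=0$ the coefficient $1/\r$ degenerates, but the identification with the genuine $3$-dimensional Laplacian on the axisymmetric domain $\O$ is precisely what keeps the problem well posed there.

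With $\p$ now known, I would read (\ref{2_5}) and (\ref{4_5}) as prescriptions for the first partials of $\g$,
\[
\g_\r=\r(\p_\r^2-\p_z^2)-\r g,\qquad \g_z=2\r\p_\r\p_z-\r h.
\]
To recover $\g$ by a line integral I must verify that this differential is closed, i.e. $\pa_z\g_\r=\pa_\r\g_z$. Differentiating and substituting the Laplace equation $\p_{\r\r}+\p_{zz}=-\p_\r/\r$ to eliminate the second derivatives of $\p$, all $\p$-terms cancel and the closedness condition collapses to $(\r h)_\r=\r g_z$. Since the section of $\O$ with any half-plane through the $z$-axis is simply connected, a closed differential there is exact, so $\g$ exists and is unique up to an additive constant, which the point condition (\ref{8_5}) fixes.

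It remains to confirm that (\ref{5_5}), which was not used in constructing $\g$, holds automatically. Computing $\g_{\r\r}+\g_{zz}$ from the formulas above and again invoking the Laplace equation, I expect (\ref{5_5}) to reduce to the single scalar identity $(\r g)_\r=-\r h_z$. The two residual requirements $(\r h)_\r=\r g_z$ and $(\r g)_\r=-\r h_z$ are precisely the Cauchy--Riemann equations for the function $\r h+i\,\r g=\r(h+ig)$ of the complex variable $\r+iz$, so the analyticity hypothesis of the statement delivers both at once; in the physically relevant case $g=h=0$ they hold trivially. Uniqueness is then immediate: $\p$ solves the Dirichlet problem uniquely, the formulas leave no freedom in $\g$ beyond an additive constant, and (\ref{8_5}) removes it. The step I expect to be the main obstacle is this last verification---tracking the mixed derivatives of $\p$ and using ellipticity of the axisymmetric Laplacian so that the $\p$-dependence disappears, leaving conditions that involve only $g$ and $h$ and are met exactly by analyticity.
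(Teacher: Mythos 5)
Your proposal is correct and is essentially the paper's own proof: subtracting (\ref{5_5}) from (\ref{1_5}) reduces the system to the axisymmetric Laplace--Dirichlet problem for $\p$ (the content of the paper's Lemma 2.2, via the identification with the three-dimensional Laplacian), after which $\g$ is recovered by integrating the closed differential coming from (\ref{2_5})--(\ref{4_5}) on the simply connected half-plane section, with the two residual conditions $(\r h)_\r=(\r g)_z$ and $(\r g)_\r=-(\r h)_z$ being exactly the Cauchy--Riemann equations for $\r h+i\r g$ (one for closedness, one for the verification of (\ref{5_5})), and uniqueness argued just as in the paper. As a minor point in your favor, your formulas $\g_\r=\r\bigl(\p_\r^2-\p_z^2\bigr)-\r g$ and $\g_z=2\r\p_\r\p_z-\r h$ carry the signs that actually follow from (\ref{2_5}) and (\ref{4_5}), whereas the paper's displayed $F$ and $G$ contain harmless sign slips that do not affect the argument.
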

For the proof of Theorem 2.1 use will be made of the following

\begin{lemma}
Let $\O$ be an axial symmetric subset of $\RRR$ with a regular boundary $\G$. Let $\p_\G$ be a function of class $C^{0,\a}(\G)$ not depending on the angular variable. Then the problem

\begin{equation}
\label{1_6}
\p_{\r\r}+\frac{\p_\r}{\r}+\p_{zz}=0\ \9,\quad\p=\p_{\G}\0
\end{equation}
has one and only one solution.
\end{lemma}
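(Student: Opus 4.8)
The plan is to recognize that equation (\ref{1_6}) is precisely the axisymmetric Laplace equation: if we interpret $\p=\p(\r,z)$ as a function on $\RRR$ that is independent of the angular variable $\f$, then the three-dimensional Laplacian in cylindrical coordinates reads $\D\p=\p_{\r\r}+\frac{\p_\r}{\r}+\p_{zz}+\frac{1}{\r^2}\p_{\f\f}$, and the last term vanishes by the assumed axial symmetry. Hence (\ref{1_6}) is equivalent to the Dirichlet problem $\D\p=0$ in $\O$ with $\p=\p_\G$ on $\G$, where $\O$ is now viewed as a genuine three-dimensional domain with regular boundary.

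First I would make this reduction explicit, noting that $\p_\G\in C^{0,\a}(\G)$ is independent of $\f$ and therefore defines legitimate Dirichlet data on the full boundary of the three-dimensional region. Then I would invoke the classical existence and uniqueness theory for the Dirichlet problem for Laplace's equation on a bounded domain with regular boundary and continuous (indeed H\"older-continuous) boundary data. Existence follows from the Perron method or the standard Schauder/potential-theoretic theory, which under the regularity hypothesis on $\G$ (every boundary point being regular in the sense of barriers) produces a harmonic function attaining the prescribed boundary values continuously.

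For uniqueness I would appeal to the maximum principle for harmonic functions: the difference of two solutions is harmonic in $\O$, vanishes on $\G$, and hence vanishes identically in $\O$. Finally I would remark that the resulting solution, being harmonic and independent of $\f$ by construction (one may symmetrize, or simply observe that the unique solution inherits the symmetry of the data), solves the original axisymmetric problem (\ref{1_6}).

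The only genuinely delicate point is the passage between the two- and three-dimensional viewpoints together with the regularity of the solution up to $\G$ under merely H\"older-continuous data. I expect the main obstacle to be verifying that the ``regular boundary'' hypothesis indeed guarantees solvability of the Dirichlet problem with continuous attainment of the boundary values; once the equation is identified as the genuine Laplace equation in $\RRR$, the rest is a direct citation of classical elliptic theory, so the heart of the argument is this single reduction rather than any new estimate.
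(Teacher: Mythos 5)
Your proposal is correct and follows essentially the same route as the paper: both reduce (\ref{1_6}) to the genuine three-dimensional Dirichlet problem for the Laplacian in cylindrical coordinates, invoke classical existence and uniqueness for that problem, and recover independence of $\f$ exactly as you sketch it (the paper makes your ``the unique solution inherits the symmetry of the data'' explicit by noting that $\p(\r,z,\f+k)$ solves the same problem for every $k$, so uniqueness forces $\p_\f\equiv 0$). No gap; you have identified the one genuine idea of the proof.
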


\begin{proof}
The operator $\p_{\r\r}+\frac{\p_\r}{\r}+\p_{zz}$ entering in (\ref{1_6}) is a ``piece'' of the laplacian in cylindrical coordinates. This suggests to consider the auxiliary problem

\begin{equation}
\label{2_7}
\p_{\r\r}+\frac{\p_\r}{\r}+\p_{zz}+\frac{\p_{\f\f}}{\r^2}=0\ \9,\quad\p=\p_{\G}\0.
\end{equation}
By standard results, see e.g.\cite{PW}, the problem (\ref{2_7}) has one and only one solution $\p(\r,z,\f)$. On the other hand, $\bar\p(\r,z,\f+k)$, with $k$ an arbitrary real number, is also a solution of the problem (\ref{2_7}) in view of the axial symmetry of $\O$ and $\p_\G$. Hence, by uniqueness, $\bar\p=\p$. Therefore, $\p$ does not depend on $\f$ and thus it is the only solution of the original problem (\ref{1_6}).
\end{proof}

\begin{proof}(of Theorem 2.1) Let $\tilde\p(\r,z)$ be the unique solution of problem (\ref{1_6}) given by Lemma 2.2 and let us consider the first order system

\begin{equation}
\label{1_9}
\g_\r=F(\r,z),\quad\g_z=G(\r,z)
\end{equation}
with the condition

\begin{equation}
\label{3_9}
\g\bigl(\r_0,z_0\bigl)=\g_0,
\end{equation}
where in (\ref{1_9}) $F(\r,z)=\r\bigl(\tilde\p_\r^2-\tilde\p_z^2+g(\r,z)\bigl),\quad G(\r,z)=2\r\bigl(\tilde\p_\r\tilde\p_z+h(\r,z)\bigl)$. We have

\begin{equation*}
G_\r-F_z=2\r\tilde\p_z\bigl(\tilde\p_{\r\r}+\frac{\tilde\p_\r}{\r}+\tilde\p_{zz}\bigl)+\bigl(\r h)_\r-\bigl(\r g\bigl)_z.
\end{equation*}
Since $\r h+i\r g$ is analytic we conclude that $G_\r=F_z$ by (\ref{1_6}). Therefore, the system (\ref{1_9}) is integrable and with the condition (\ref{3_9}) its solution is unique. It remains to prove that

\begin{equation}
\label{1_10}
\tilde\p_\r^2+\tilde\p_z^2+\tilde\g_{\r\r}+\tilde\g_{zz}=0.
\end{equation}
From (\ref{1_9}) we have

\begin{equation*}
\tilde\g_{\r\r}+\tilde\g_{zz}=\tilde\p_\r^2-\tilde\p_z^2+2\r\tilde\p_\r\tilde\p_{\r\r}+2\r\tilde\p_\r\tilde\p_{zz}+\bigl(g+\r g_\r+\r h_z\bigl).
\end{equation*}
On the other hand, $\r h+i\r g$ is analytic, hence

\begin{equation}
\label{3_10}
\tilde\g_{\r\r}+\tilde\g_{zz}=\tilde\p_\r^2-\tilde\p_z^2+2\r\tilde\p_\r\tilde\p_{\r\r}+2\r\tilde\p_\r\tilde\p_{zz}.
\end{equation}
Substituting (\ref{3_10}) in (\ref{1_10}) we have, by (\ref{1_6}),

\begin{equation*}
\tilde\p_\r^2+\tilde\p_z^2+\tilde\g_{\r\r}+\tilde\g_{zz}=2\tilde\p_\r\r\bigl(\tilde\p_{\r\r}+\frac{\tilde\p_\r}{\r}+\tilde\p_{zz}\bigl)=0.
\end{equation*}
Therefore, $(\tilde\p(\r,z),\tilde\g(\r,z))$ is a solution of the problem (\ref{1_5})-(\ref{8_5}). On the other hand, this solution is also unique. For, let $(\p^*,\g^*)$ be a second solution. By difference from (\ref{1_3}) and (\ref{5_3}) we obtain

\begin{equation*}
\p^*_{\r\r}+\frac{\p^*_\r}{\r}+\p^*_{zz}=0\ \9,\quad\p^*=\p_{\G}\0.
\end{equation*}
Thus $\tilde\p$ and $\p^*$ are both solutions of a problem for which there is uniqueness. Hence $\tilde\p=\p^*$. Since $\tilde F(\r,z)=F^*(\r,z)$, $\tilde G(\r,z)=G^*(\r,z)$ and $\tilde\g(\r_0,z_0)=\g^*(\r_0,z_0)$, we also have $\tilde\g(\r,z)=\g^*(\r,z)$. We conclude that the problem (\ref{1_5})-(\ref{8_5}) has one and only one solution.
\end{proof}

\section{The boundary value problem in presence of a fluid}
In this Section we assume again to be in the axial symmetric situation of Section 1 with the metric (\ref{-1_1}).  The domain $\O$ is supposed to be filled with an incompressible viscous fluid. The energy-stress tensor reads (\cite{LL} page 512)

\begin{equation}
\label{1_14}
T_{ik}=-pg_{ik}+(p+\e)u_iu_k-c\ep\bigl(u_ {i;k}+u_{k;i}-u_iu^lu_{k;l}-u_ku^lu_{i;k}\bigl),
\end{equation}
where $u_i$ is the covariant four-velocity, $p$ denotes the pressure, $\ep$ is the viscosity and $\e$ the energy density which is assumed here to be a given constant.  We consider first an hydrostatic case in which the physical velocity $\vvv=(v_\r,v_z,v_\f)$ is assumed to vanish. Correspondingly the covariant four-velocity is given, recalling the metric (\ref{-1_1}), by

\begin{equation}
\label{1_15}
u_i=\bigl(e^\p,0,0,0\bigl).
\end{equation}
We want to determine $\p$, $\g$ and $p$ assuming the boundary conditions (\ref{6_3}). The non-vanishing components of the covariant derivative $u_{i;j}$ are

\begin{equation*}
u_{2;1}=-\p_\r e^\p,\quad u_{3;1}=-\p_z e^\p
\end{equation*}
and those of the energy-stress tensor $T_{ik}$ are given by

\begin{equation*}
T_{11}=-pe^{2\p}+(\e+p)e^{2\p},\quad T_{22}=pe^{2\g-2\p},\quad T_{33}=pe^{2\g-2\p},\quad T_{44}=pe^{-2\p}\r^2.
\end{equation*}
The Einstein's equations (\ref{1_2}) become,  by (\ref{1_1})-(\ref{5_1}),

\begin{equation}
\label{1_17}
e^{4\p-2\g}\bigl[\p_\r^2+\p_z^2+\g_{\r\r}+\g_ {zz}-2\bigl(\p_{\r\r}+\frac{\p_\r}{\r}+\p_{zz}\bigl)\bigl]=K\bigl[\bigl(\e+p\bigl)e^{2\p}-pe^{2\p}\bigl]
\end{equation}

\begin{equation}
\label{2_17}
\p_\r^2-\p_z^2-\frac{\g_\r}{\r}=Kpe^{2\g-2\p}
\end{equation}

\begin{equation}
\label{3_17}
\frac{\g_\r}{\r}-\p_\r^2+\p_z^2=Kpe^{2\g-2\p}
\end{equation}

\begin{equation}
\label{4_17}
2\p_r\p_z-\frac{\g_z}{\r}=0
\end{equation}

\begin{equation}
\label{5_17}
-e^{-2\g}\r^2\bigl(\p_\r^2+\p_z^2+\g_{\r\r}+\g_{zz}\bigl)=K p e^{-2\p}\r^2.
\end{equation}
By adding (\ref{2_17}) and (\ref{3_17}) we infer

\begin{equation*}
p=0.
\end{equation*}
Hence the system (\ref{1_17})-(\ref{5_17}) can be rewritten as

\begin{equation}
\label{2_18}
\p_\r^2+\p_z^2+\g_{\r\r}+\g_{zz}-2\bigl(\p_{\r\r}+\frac{\p_\r}{\r}+\p_{zz}\bigl)=K\e e^{2\g-2\p}
\end{equation}

\begin{equation}
\label{3_18}
\g_\r=\r\bigl(\p_\r^2-\p_z^2\bigl)
\end{equation}

\begin{equation}
\label{4_18}
\g_z=2\r\p_\r\p_z
\end{equation}

\begin{equation}
\label{5_18}
\p_\r^2+\p_z^2+\g_{\r\r}+\g_{zz}=0.
\end{equation}
From (\ref{2_18}) and (\ref{5_18}) we obtain

\begin{equation}
\label{6_18}
-2\bigl(\p_{\r\r}+\frac{\p_\r}{\r}+\p_{zz}\bigl)=K\e\  e^{2\g-2\p}.
\end{equation}
Let $F(\r,z)=\r\bigl(\p_\r^2-\p_z^2\bigl),\quad G(\r,z)=2\r\p_\r\p_z$. From (\ref{3_18}) and (\ref{4_18}) we have

\begin{equation*}
F_z-G_\r=-2\r\p_z\bigl(\p_{\r\r}+\frac{\p_\r}{\r}+\p_{zz}\bigl).
\end{equation*}
Hence, by (\ref{6_18}) we obtain

\begin{equation*}
F_z-G_\r=2K\e e^{2\g-2\p}\r\p_z.
\end{equation*}
Therefore, the system (\ref{3_18}), (\ref{4_18}) i.e.

\begin{equation*}
\g_\r=F(\r,z),\quad \g_z=G(\r,z)
\end{equation*}
is not integrable if $\e\neq 0$ and the system (\ref{1_17})-(\ref{5_17}) cannot have solutions in this case\footnote{This implies that the condition $\e\neq 0$ is incompatible with the assumption (\ref{1_15}) of absence of fluid motion}. On the other hand, if $\e=0$ the system (\ref{2_18})-(\ref{5_18}) becomes

\begin{equation*}
\p_\r^2+\p_z^2+\g_{\r\r}+\g_{zz}-2\bigl(\p_{\r\r}+\frac{\p_\r}{\r}+\p_{zz}\bigl)=0
\end{equation*}

\begin{equation*}
\g_\r=\r\bigl(\p_\r^2-\p_z^2\bigl)
\end{equation*}

\begin{equation*}
\g_z=2\r\p_\r\p_z
\end{equation*}

\begin{equation*}
\p_\r^2+\p_z^2+\g_{\r\r}+\g_{zz}=0.
\end{equation*}
Thus we are formally in the situation of Section 1 and the corresponding results of existence and uniqueness apply.

\section{A Poiseuille-like solution in general relativity}
The space outside an indefinite cylinder of radius $R$ is filled with an incompressible viscous fluid\footnote{The axis of the cylinder is the $z$-axis of the cylindrical coordinates system}. Body forces are absent and the cylinder is supposed to move with a constant velocity $v_R$ in the $z$-direction. If we assume

\begin{equation}
\label{1_7n}
v_\r=0,\quad v_z=v(\r),\quad v_\f=0
\end{equation}
the Navier-Stockes equations reduce to the single equation, see \cite{LL1}

\begin{equation}
\label{3_7n}
v''+\frac{v'}{\r}=0.
\end{equation}
The non-slip condition gives

\begin{equation}
\label{1_8n}
v(R)=v_R.
\end{equation}
 In view of (\ref{1_8n}) we have, solving (\ref{3_7n})

\begin{equation*}
v(\r)=C\log\r+v_R-C\log R.
\end{equation*}
Assuming the velocity to remain bounded we obtain the exceedingly simple solution

\begin{equation*}
v(\r)=v_R.
\end{equation*}
We ask the following question: starting with the same assumptions (\ref{1_7n}), (\ref{1_8n}) which solution is given by the equations of general relativity? We use the Weyl's metric (\ref{1_1}) assuming $\p$ and $\g$ to be function of $\r$ only. In view of the one-dimensional character of the problem, the non-vanishing components of the Einstein tensor take on the form

\begin{equation*}
G_{11}=e^{4\p+2\g}\Bigl[\p'^2+\g''-2\Bigl(\p''+\frac{\p'}{\r}\Bigl)\Bigl],\quad G_{22}=\p'^2-\frac{\g'}{\r}
\end{equation*}

\begin{equation*}
G_{33}=-\p'^2+\frac{\g'}{\r},\quad G_{44}=-e^{-2\g}\r^2\bigl(\p'^2+\g''\bigl).
\end{equation*}
The contravariant four-velocity $u^i=\frac{dx_i}{ds}=\bigl(e^{-\p},0,v(\r)e^{-\p},0\bigl)$ in covariant form is
\begin{equation*}
u_i=\bigl(e^\p,0,-e^{2\g-3\p},0\bigl).
\end{equation*}
 Turning to the non vanishing components of the energy-stress tensor we find, using (\ref{1_14})

\begin{equation*}
T_{11}=\e e^{2\p},\quad T_{12}=c\eta v^2 e^{2\g-3\p}\bigl(\g'-\p'\bigl),\quad T_{13}=-\bigl(\e+p\bigl)v e^{2\g-2\p}
\end{equation*}

\begin{equation*}
T_{22}=p e^{2\g-2\p}, \quad T_{23}=c\eta v' e^{2\g-3\p}
\end{equation*}

\begin{equation*}
T_{33}=pe^{2\g-2\p}+(\e+p)e^{4\g-6\p}v^2,\quad T_{44}=p\r e^{-2\p}.
\end{equation*}
For the non identically satisfied Einstein's equations (\ref{1_2}) we find

\begin{equation}
\label{5_12n}
e^{2\p-2\g}\bigl[\p'^2+\g''-2\bigl(\p''+\frac{\p'}{\r}\bigl)\bigl]=\e K
\end{equation}

\begin{equation}
\label{6_12n}
c\eta v^2e^{2\g-3\p}\bigl(\g'-\p'\bigl)=0
\end{equation}

\begin{equation}
\label{1_13n}
(\e+p)e^{2\g-2\p}v=0
\end{equation}

\begin{equation}
\label{2_13n}
\p'^2-\frac{\g'}{\r}=Kpe^{2\g-2\p}
\end{equation}

\begin{equation}
\label{3_13n}
c\eta v' e^{2\g-3\p}=0
\end{equation}

\begin{equation}
\label{4_13n}
-\p'^2+\frac{\g'}{\r}=K\bigl[pe^{2\g-2\p}+(\e+p)e^{4\g-6\p}v^2\bigl]
\end{equation}

\begin{equation}
\label{5_13n}
\p'^2+\g''=-Kp e^{2\g-2\p}.
\end{equation}
From (\ref{3_13n}) we have $v(\r)=v_R$, as in the Navier-Stockes case, if we assume the boundary condition $v(R)=v_R$. From (\ref{6_12n}) we obtain,  if  $v_R\neq 0$,

\begin{equation}
\label{1_14n}
 \g=\p+C
\end{equation}
 $C$ a constant to be determined. By (\ref{1_13n}) we have

\begin{equation*}
p=-\e.
\end{equation*}
Thus (\ref{4_13n}) becomes

\begin{equation}
\label{3_14n}
-\p'^2+\frac{\g'}{\r}=K p e^{2\g-2\p}.
\end{equation}
Adding (\ref{3_14n}) and (\ref{2_13n}) we obtain

\begin{equation}
\label{4_14n}
2K p e^{2\g-2\p}=0.
\end{equation}
Hence

\begin{equation}
\label{5_14n}
p=0,\quad \e=0.
\end{equation}
We conclude that the problem is compatible only with a null energy density. Moreover, (\ref{5_13n}) becomes

\begin{equation}
\label{1_15n}
\p'^2+\g''=0.
\end{equation}
In addition (\ref{5_12n}) gives

\begin{equation*}
\p''+\frac{\p'}{\r}=0.
\end{equation*}
This means

\begin{equation*}
\p(\r)=k_1\log\r+k_2.
\end{equation*}
On the other hand, by (\ref{1_14n}) we have

\begin{equation*}
\p'=\g'=\frac{k_1}{\r},\quad \g''=-\frac{k_1}{\r^2}.
\end{equation*}
By (\ref{1_15n}), it follows

\begin{equation*}
\frac{k_1^2}{\r^2}=\frac{k_1}{\r^2}.
\end{equation*}
Hence, either $k_1=1$ or $k_1=0$. If $k_1=1$ we have

\begin{equation*}
\p(\r)=\log\r+k_2.
\end{equation*}
The constant $k_2$ is determined with a condition of the form $\p(R)=\p_R$ which gives $\p(\r)=\log\r-\log R+\p_R$ (compare for this solution \cite{TLC}). Finally with a condition like $\g(R)=\g_R$ we determine the, still unknown, constant $C$ entering in (\ref{1_14n}). We obtain $\g(\r)=\log\r-\log R+\g_R$. We conclude that the boundary conditions $\p(R)=\p_R$, $ \g(R)=\g_R$ and $v(R)=v_R$ make the problem well-posed in agreement with the corresponding solution of the Navier-Stockes equations and  determine completely also the metric (\ref{-1_1}).

\begin{remark}
Observe that the equations of motion $T^{ij}_{;j}=0$ is in this case automatically satisfied. 
\end{remark}

\begin{remark}
If, in the contest of Newtonian hydrodynamics, we state the cognate problem of the viscous fluid motion between two coaxial cylinders of radii $R_2>R_1>0$ moving respectively with given velocities $v_2$ and $v_1$ in the $z$ direction, again in absence of body forces, the problem is well-posed and easily solved with the non-slip conditions $v(R_2)=v_2$, $v(R_1)=v_1$ and we find easily

\begin{equation*}
v(\r)=\frac{v_2-v_1}{\log\frac{\r}{R_1}}\log\frac{\r}{R_1}+v_1.
\end{equation*}
However, in the contest of general relativity there is no parallel situation since the equation (\ref{3_13n}) permits to impose only one boundary condition. This  is inherent to the fact that the Einstein's equations are of the first order.
\end{remark}

\bibliographystyle{amsplain}

\end{document}